\newcommand{\Z}{\mathbb{Z}}
\newcommand{\N}{\mathbb{N}}
\newtheorem{defn}{Definition}[section]
\newtheorem{thm}{Theorem}[section]
\newtheorem{rem}{Remark}[section]
\newtheorem{lem}{Lemma}[section]
\begin{document}

\title{Some Algebraic Properties Of Linear Synchronous Cellular Automata}
\author{Sreeya Ghosh and Sumita Basu}

\newcommand{\Addresses}{{
  \bigskip
  \footnotesize

Sreeya Ghosh(Corresponding author), \textsc{Dept. of Applied Mathematics; Calcutta University,
Rashbehari Shiksha Prangan,92,A.P.C. Road; Kolkata-700009.} \par \nopagebreak
Email: \texttt{sreeya135@gmail.com}

\medskip

Sumita Basu, \textsc{Bethune College;
181 Bidhan Sarani; Kolkata700006} \par \nopagebreak
 Email: \texttt{sumi\underline{ }basu05@yahoo.co.in}

}}

\maketitle

\begin{abstract}
Relation between global transition function and local transition function of a homogeneous one dimensional cellular automaton (CA) is investigated for some standard transition functions. It could be shown that left shift and right shift CA are invertible. The final result of this paper states that the set of all left and right shift CA together with the identity CA on the same set forms an abelian group.
\end{abstract}
 
\textbf{Key words : } Cellular Automaton, Transition function, Invertible Cellular Automaton.
\section{Introduction}Cellular Automata is the computational model of a dynamical system.This model was introduced by J.von Neumann and S.Ulam in 1940 for designing self replicating systems which later saw applications in Physics, Biology and Computer Science.\\
Neumann conceived a cellular automaton as a two-dimensional mesh of finite state machines called cells which are locally interconnected with each other. Each of the cells change their states synchronously depending on the states of some neighbouring cells (for details see \cite{NJV1, USM2} and references therein).  The local changes of each of the cells together induces a change of the entire mesh. Later one dimensional CA, i.e a CA where the elementary cells are distributed on a straight line was studied. Stephen Wolfram's work in the 1980s contributed to a systematic study of one-dimensional CA, providing the first qualitative classification of their behaviour ( reported in \cite{WOS3, WOS4, WOS5} ).\\ In this paper we consider only synchronous one dimensional Cellular Automaton (CA) where the underlying topology is a one dimensional grid line. A finite automaton with finite memory models a simple computation. A CA is a computation model of a dynamical system where finite/countably infinite number of automaton are arranged in an ordered linear grid. 
\begin{center}
		\begin{tabular}{c|c|c|c|c|c|c}
\hline
 ... & ... & $A_{i-1}$ & $A_i$ & $A_{i+1}$ & ... & ... \\
\hline
		\end{tabular}\\
		\vspace{0.05in}
\textit{Fig.1}\\ 
		\textit{A typical grid of a linear CA}
\end{center}
Each of the automaton work synchronously leading to evolution of the entire grid through a number of discrete time steps. If the set of memory elements of each automaton is \{0,1\} then a typical pattern evolved over time $t$ may be as follows : 
	\begin{center}
		\begin{tabular}{c|c|c|c|c|c|c|c|c|c|c}
\hline
 Time $\downarrow$ & Grid Position $(i) \rightarrow$ & ... & -3 & -2 & -1 & 0 & 1 & 2 & 3 &...\\
\hline
$t=0$ & Configuration $C^0 \rightarrow$ & ... & 0 & 0 & 0 & 1 & 0 & 0 & 0 &...\\
\hline	
 $t=1$ & $C^1 \rightarrow$ & ... & 0 & 0 & 1 & 0 & 1 & 0 & 0 &...\\
\hline	
 $t=2$ & $C^2 \rightarrow$ & ... & 0 & 1 & 0 & 1 & 0 & 1 & 0 &...\\
\hline	
 $t=3$ & $C^3 \rightarrow$ & ... & 1 & 0 & 1 & 0 & 1 & 0 & 1 &...\\
\hline
 \vdots & \vdots &... & \vdots  & \vdots  & \vdots  & \vdots  & \vdots  & \vdots  & \vdots  &...\\
\hline				
		\end{tabular}\\
		\vspace{0.05in}
\textit{Fig.2}\\
		\textit{The grid line at time $t$ gives the configuration of the CA at time $t$}
\end{center}
Algebraic properties of a CA and its relation to group theory and topology is gaining interest in recent years (see\cite{Ka05,S15,SC 04, CRG16a,CRG16b}). All the CA's referred above are deterministic in nature. Non deterministic CA have also been studied in \cite {BAS10}.\\ In this paper we study one dimensional homogeneous infinite CA and study the relationship of the macro and micro level changes of the configuration in specific cases. In Section 2 basic concepts are introduced and some fundamental results are reported. We compare the behavior of local and global transition function for some standard CA in Section 3. Section 4 is devoted to finding inverse of some standard CA. We could show that inverse of an m  place left shift function is an m {place right shift function and vice versa. Finally, in Section 5 a binary operation on the set of all CA having the same state set is defined and its properties are studied. It could be shown that the set of CA containing identity function and all shift (both left and right) functions form an abelian group under the binary function.
 
\section{Basic Concepts}
We give a formal definition of a Cellular Automaton.
\begin{defn}
Let us consider a finite set \textbf{$Q$} called the \textbf{state set}. The memory elements of the automata placed on the grid line belong to this state set $Q$.\\ A \textbf{global configuration} is a mapping from the group of integers ${\Z}$ to the set $Q$ given by $C: {\Z} \rightarrow Q$.\\
The set $Q^{\Z}$ is the set of all global configurations where $Q^{\Z} = \{C|C : {\Z} \rightarrow Q\}$.\\A mapping $\tau : Q^{\Z} \rightarrow Q^{\Z}$  is called a  \textbf{global transition function}.\\A \textbf{CA}(denoted by $\mathcal{C}_\tau^Q$) is a triplet \textbf{$(Q, Q^{\Z}, \tau)$} where $Q$ is the finite state set, $Q^{\Z}$ is the set of all configurations, $\tau$ is the global transition function.
\end{defn}
\begin{rem}For a particular state set $Q$ and a particular global transition function $\tau$  a triple \textbf{($Q, Q^{\Z}, \tau$)} denoted by $\mathcal{C_\tau^Q}$ defines the set of all possible cellular automata on  ($Q,\tau$). However, the evolution of a CA at times is dependent on the initial configuration (starting configuration) of the CA. A particular CA $\mathcal{C_\tau^Q}(C^0)\in\mathcal{C_\tau^Q}$ is defined as the quadruple \textbf{($Q, Q^{\Z}, \tau, C^0$)} such that $C^0\in Q^{\Z}$ is the initial configuration of the particular CA $\mathcal{C_\tau^Q}(C^0)$. 
\end{rem}
The configuration at time $t$ is denoted by $C^t  $ such that $C^t\in Q^{\Z}$ for all time $t$ .\\Also,$$\tau(C^t) = C^{t+1}$$
With reference to Fig 2 we get,\\ $C^0 = ...0001000...$;~~$\tau(C^0) = \tau(...0001000...) = ...0010100... = C^1 $;\\
$\tau(C^1) = \tau(...0010100...) = ...0101010... = C^2$,etc.\\\\
Evolution of a Cellular Automaton is mathematically expressed by the global transition function. However, this global transition is induced by transitions of automaton at each grid point of the CA. The transition of the state of the automaton at the ith grid point of a CA at a particular time, depends on the state of the automaton at the $i^{th}$ grid point and its adjacent cells. These adjacent cells constitute the neighbourhood of that cell. The transition of the automaton at each grid point is called local transition.
\begin{defn}
For $i\in {\Z}, r\in {\N}$,  let $S_i = \{i-r,...,i-1,i,i+1,...,i+r \}\subseteq {\Z}$. $S_i$ is the neighbourhood of the $i^{th}$ cell. r is the radius of the neighbourhood of a cell. It follows that ${\Z} = \bigcup_iS_i$\\A restriction from ${\Z}$ to $S_i$ induces the following:
\begin{enumerate}
\item Restriction of $C$ to $c_i$ is given by $c_i: S_i \rightarrow Q$; and $c_i$ may be called \textbf{local configuration} of the $i^{th}$ cell.\\
\item Restriction of $Q^{\Z}$ to $Q^{S_i}$ is given by $Q^{S_i}=\{c_i|c_i : {S_i} \rightarrow Q\}$;and $Q^{S_i}$ may be called the \textbf{set of all local configurations} of the $i^{th}$ cell.
\end{enumerate}
The mapping $\mu_i : Q^{S_i} \rightarrow Q$ is known as a \textbf{local transition function} for the $i^{th}$ automaton having radius $r$. So, $ \forall i \in { \Z},\mu_i (c_i)\in Q $. So,if the local configuration of the ith cell at time $t$ is denoted by $c^{t}_i$ then $\mu_i(c^{t}_i)=c^{(t+1)}_i(i)$.\\
\end{defn}
\begin{rem}Since $S_i$ has $(2r+1)$ elements and $Q$ is finite, $Q^{S_i}$ is finite having $|Q|^{(2r+1)}$ elements.
\end{rem}
\begin{rem}From Definition 2.2 ,
\begin{enumerate}
\item  $\forall i\in {\Z}$, $C(i) = c_i(i)$\\ 
\item $\forall j \in S_i \subseteq {\Z}, c_i(j) = C(j).$\\
\end{enumerate}
\end{rem}
\begin{rem}If $\tau(C) = C^*$ then $ C^*(i) = \tau(C)(i) = \mu_i(c_i)$. So we have, 
\begin{enumerate}
\item $C^{(t+1)}(i)=\tau(C^t)(i)=\mu_i(c^{t}_i)=c^{(t+1)}_i(i)$\\
\item$ \tau(C)=.....\mu_{i-1}(c_{i-1}).\mu_i(c_i).\mu_{i+1}(c_{i+1}).......$
\end{enumerate}
\end{rem}
\begin {defn}If all $\mu_i's$  are identical then the CA is \textit{homogeneous}.A \textbf{homogeneous Cellular Automaton} may also be defined as a triplet \textbf{$(Q, r, \mu)$} where $Q$ is the finite state set, $r$ is the radius of the neighbourhood of a cell, $\mu$ is the local transition function.
\end{defn}
Henceforth, in this paper by a CA  we will consider one dimensional homogeneous synchrounous CA.
\begin{rem}The set $\mathscr{Q}^{\Z}$, where $\mathscr{Q}^{\Z}= (Q^{\Z})^{Q^{\Z}}$ is the set of all global transition functions of a \textit{CA} defined as $\mathscr{Q}^{\Z}= (Q^{\Z})^{Q^{\Z}} = \{\tau|\tau : Q^{\Z} \rightarrow Q^{\Z}\}$.
\end{rem}
\begin{rem}The set M, where $M =\{ (Q^{S_i})^{Q}|i \in {\Z}\}$ is the set of all local transition functions of a \textit{CA} defined as $M=\{(Q^{S_i})^{Q}|i \in {\Z}\} = \{\mu|\mu : Q^{S_i} \rightarrow Q, i \in {\Z}\}$.
\end{rem}
If there is no ambiguity regarding $Q, C^0$ a CA is often denoted by $\tau$ where $\tau \in \mathscr{Q}^{\Z}$. 
\begin{defn} If for a particular CA, $ |Q|=2$ so that we can write $Q=\{0, 1\}$, then the CA is said to be a binary CA.\\For a binary CA $(Q, Q^{\Z}, \tau)$ if $C_1, C_2 \in Q^{\Z}$ such that $\tau(C_1) = C_2$ where $$ \forall i \in {\Z},~C_1(i)=0 \leftrightarrow C_2(i) =1~and ~C_1(i)=1 \leftrightarrow C_2(i) =0$$ then $C_1$ is the complement of $C_2$ and vice versa. $\tau$ is said to be the complementary transition function and is denoted by $\tau^c$.
\end{defn}
\begin{defn}A local transition function $\mu$ is an \textbf{identity function} denoted by $\mu_{e}$ if,  $\forall  i \in {\Z}, ~ \mu(c_i) =  c_i(i)$ \\
A global transition function $\tau_e$ is an \textbf{identity function} provided for all $C \in Q^{\Z}, \tau_e(C) = C$. \\
A CA is said to be an \textbf{identity Cellular Automaton} if the global transition function is an identity function.
\end{defn}
\begin{defn}A local transition function $\mu$ is a \textbf{constant function} denoted by $\mu_{q}$ if, $\forall  i \in {\Z}, ~ \mu(c_i) =  q$  for a particular state $q \in Q$.\\
A global transition function $\tau$ is a \textbf{constant function} provided for all $C \in Q^Z, \tau(C) = C^*$ for a particular constant configuration $C^* \in Q^{\Z}$.\\A CA is said to be a \textbf{constant Cellular Automaton} if the global transition function is a constant function.
\end{defn}
\begin{defn}A local transition function $\mu$ is an \textbf{m-place left shift function} denoted by $\mu_{Lm}$ where $m \in {\N}$ is finite, if the state of the $i^{th}$ automaton $c_i(i)$ shifts $m-place$ leftwards. So, $ \forall i \in {\Z}, \mu_{Lm}(c_i) = c_i(i+m)$ where $c_i$ is the restriction of $C$.\\
A global transition function $\tau$ is an \textbf{m-place left shift function} denoted by $\tau_{Lm}$ where $m \in {\N}$ is finite, if $ \forall i \in {\Z}, \tau_{Lm}(C)(i) = C(i+m)$.\\A CA is said to be a \textbf{m-place left shift Cellular Automaton} if the global transition function is a m-place left shift function.

\end{defn}
\begin{defn}A local transition function $\mu$ is an \textbf{m-place right shift function} denoted by $\mu_{Rm}$ where $m \in {\N}$ is finite, if the state of the $i^{th}$ automaton $c_i(i)$ shifts $m-place$ rightwards. So, $ \forall i \in {\Z}, \mu_{Rm}(c_i) = c_i(i-m)$ where $c_i$ is the restriction of $C$.\\
A global transition function $\tau$ is an \textbf{m-place right shift function} denoted by $\tau_{Rm}$ where $m \in {\N}$ is finite, if $ \forall i \in {\Z}, \tau_{Rm}(C)(i) = C(i-m)$.\\A CA is said to be a \textbf{m-place right shift Cellular Automaton} if the global transition function is a m-place right shift function.
\end{defn}
\begin{thm}
For a homogeneous CA, if the global transition function $\tau$ is a constant function then $\forall i \in {\Z}$, $\tau(C)(i)$ is identical. 
\end{thm}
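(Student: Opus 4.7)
The plan is to combine the homogeneity of the CA with the defining property of a constant global transition function, testing $\tau$ on a particularly simple configuration. By Remark 2.3, for every $i \in \Z$ and every $C \in Q^{\Z}$ one has $\tau(C)(i) = \mu_i(c_i)$, and since the CA is homogeneous all the $\mu_i$'s are identical to a single local rule $\mu$. Thus the task reduces to showing that $\mu(c_i)$ does not depend on $i$ when evaluated at the configuration to which $\tau$ collapses everything.

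First, I would fix an arbitrary state $q_0 \in Q$ and consider the constant configuration $C_0 \in Q^{\Z}$ defined by $C_0(j) = q_0$ for every $j \in \Z$. Its local restriction $c_{0,i}$ is then the pattern on $S_i = \{i-r,\dots,i+r\}$ that is identically equal to $q_0$; under the homogeneity convention, this is the same input (up to the canonical translation identifying $S_i$ with $S_j$), so that $\mu(c_{0,i})$ yields a single value $q \in Q$ independent of $i$. Hence $\tau(C_0)(i) = q$ for every $i \in \Z$. Finally, since $\tau$ is constant, there exists $C^* \in Q^{\Z}$ with $\tau(C) = C^*$ for all $C$; in particular $C^* = \tau(C_0)$, so $\tau(C)(i) = C^*(i) = q$ for every $C$ and every $i$, which is exactly the conclusion that $\tau(C)(i)$ is identical across $i$.

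The main obstacle, I expect, is precisely the point where homogeneity is invoked: the paper's Definition 2.5 states only that all $\mu_i$'s are identical, while their nominal domains $Q^{S_i}$ differ with $i$. To justify that $\mu(c_{0,i}) = \mu(c_{0,j})$ one must make explicit that ``identical'' is to be read modulo the translation $S_i \leftrightarrow S_j$, i.e.\ that $\mu$ depends only on the pattern of states in the neighbourhood and not on its absolute location. Once that reading is fixed, the argument is a one-line application of Remark 2.3 together with the constancy of $\tau$; no deeper machinery (e.g.\ a Curtis--Hedlund--Lyndon style shift-equivariance) needs to be developed.
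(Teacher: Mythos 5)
Your proposal is correct, and it actually takes a cleaner route than the paper. The paper argues by contradiction on an \emph{arbitrary} configuration $C$: it writes $\tau(C)(j)\neq\tau(C)(k) \implies \mu(c_j)\neq\mu(c_k)$ and declares this a contradiction to homogeneity. As stated, that step is too strong: for a general $C$ the restrictions $c_j$ and $c_k$ are different local patterns, so a single local rule $\mu$ may perfectly well assign them different values (think of the identity rule on a non-constant $C$); homogeneity forces $\mu(c_j)=\mu(c_k)$ only when $c_j$ and $c_k$ coincide as patterns, i.e.\ up to the translation identifying $S_j$ with $S_k$. Your proof repairs exactly this point: by testing $\tau$ on a constant configuration $C_0\equiv q_0$, all local restrictions are the same pattern, homogeneity legitimately yields a single value $q=\mu(c_{0,i})$ for every $i$, and then the constancy of $\tau$ (so that $C^*=\tau(C_0)$) transfers the conclusion to every $C$. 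What the paper's argument buys is brevity, at the cost of an unjustified step; what yours buys is rigor, plus the explicit observation (which the paper leaves implicit in Definition 2.5) that ``all $\mu_i$ identical'' must be read modulo the canonical translation between the neighbourhoods $S_i$. No further machinery is needed, so your assessment that this is essentially a one-line application of Remark 2.5 once the homogeneity convention is fixed is accurate.
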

\begin{proof}
Let the global transition function $\tau$ of a homogneous CA be a constant function.\\
Then for all $C \in Q^{\Z}$, $$\tau(C) = \tau_q(C) = C^*$$ where $C^* \in Q^{\Z}$ is a particular configuration.\\
Thus if  $\forall i \in {\Z}$, $\mu=\mu_i$ be the corresponding  local transition function then $$\tau(C)(i) = C^*(i) = \mu_i(c_i) = q_i \in Q$$ 
Now, for $j,k \in {\Z}$,$$ \tau(C)(j)  \neq \tau(C)(k) \implies C^*_j \neq C^*_k  \implies \mu_j(c_j) \neq \mu_k(c_k) \implies \mu(c_j) \neq \mu(c_k)$$
This is a contradiction to the fact that the CA is homogeneous.\\
Hence the theorem.
\end{proof}
\section{Relation between Local and Global Transition Function of a Homogeneous Cellular Automaton}
The global transition function $\tau$ describes the evolution of the dynamical system at the macro level. The corresponding local transition function($\mu$) having radius of the neighbourhood $r$ denoted by $\mu^ r$ describes the same at micro level.
\begin{thm}For a homogeneous CA, having the global transition function $\tau$ the corresponding $\mu^r$ may not be unique.
\end{thm}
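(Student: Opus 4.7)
The plan is to prove this non-uniqueness statement by exhibiting a single homogeneous CA whose global transition function $\tau$ is induced by two distinct local transition functions of different radii. The identity CA offers the cleanest illustration. First I would fix a finite state set $Q$ and take $\tau = \tau_e$. I would then introduce two candidate local rules: the radius-$0$ rule $\mu^{0} : Q^{\{i\}} \to Q$ defined by $\mu^{0}(c_i) = c_i(i)$, and the radius-$1$ rule $\mu^{1} : Q^{\{i-1,i,i+1\}} \to Q$ defined by $\mu^{1}(c_i) = c_i(i)$, i.e., the rule that formally reads a three-cell neighbourhood but ignores the two outer cells.

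Next I would invoke Remark~2.3, which expresses $\tau(C)(i) = \mu(c_i)$ coordinate by coordinate, to verify that both $\mu^{0}$ and $\mu^{1}$ induce the same global map: in either case $\tau(C)(i) = c_i(i) = C(i)$ for every $i \in \Z$ and every $C \in Q^\Z$, so both choices yield $\tau_e$. By Remark~2.2 the two rules have domains of different cardinalities, namely $|Q|$ and $|Q|^{3}$ respectively, so they are genuinely distinct functions. Hence the same global transition function $\tau_e$ is produced by two inequivalent local descriptions, which establishes the theorem.

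I do not anticipate any serious obstacle: the argument is a pure construction with no combinatorial or topological content, and one could just as well use an $m$-place shift rule padded to a larger radius to obtain the same conclusion. The only point I would be careful to articulate is why it is legitimate to enlarge the radius while keeping the rule effectively unchanged --- namely, Definition~2.2 places no requirement that a local transition function depend non-trivially on every coordinate of its input, so padding the neighbourhood with cells that the rule then ignores still produces a valid $\mu^r$ in the sense of that definition.
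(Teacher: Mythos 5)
Your proof is correct and takes essentially the same route as the paper: the paper exhibits the global function $\tau_{L3}$ realized by $\mu^r(c_i)=c_i(i+3)$ for every radius $r\geq 3$, while you exhibit $\tau_e$ realized at two different radii, and in both cases the non-uniqueness comes from padding the neighbourhood with cells the rule ignores. The only small caveat is that Definition~2.2 requires $r\in\N$, so if $0\notin\N$ under the paper's convention you should compare radii $1$ and $2$ rather than $0$ and $1$; the argument is otherwise unchanged.
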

\begin{proof}The result is proved by example.\\Let,  $ x_n = 2^{3n}.$ Then the squence$\{x_n\}$ in decimal system is $\{1,8, 64.....\}$\\The binary representation of this sequence is $\{1,1000, 1000000, 1000000000, .....\}$. This sequence can be generated by the CA with state set $Q=\{0,1\}~and~ \tau= \tau_{L3}$. Let $\mu^r : Q^{S_i} \rightarrow Q$ be the local transition function of the CA having radius $r \in {\N}$.\\ Depending on the value of r we have thefollowing cases.
\begin{itemize}
	\item Case 1 : $r<3$ say $r=2$\\
	Since $c_i(i+3)$ is not defined, $\mu^2(c_i) \neq c_i(i+3)$.	So, the local transition cannot be represented by $\mu^2$. Similarly $\mu^1$ cannot be the local transition function.
	\item Case 2 : $r \geq 3$\\ 
In this case $c_i(i+3)$ is defined, and $C(i+3)=c_i(i+3)$.  
	So, $\forall r \geq 3$, we may define, $ \mu^r(c_i)=c_i(i+3)$ and $\mu^r$ may act  as local transition function of the CA\\
	\end{itemize}
Case 2 shows that for a given global transition function the local transition function is not unique.
\end{proof}
\begin{lem} Given a local transition function $\mu^ r$ the corresponding global transition function $\tau$ is  unique. 
\end{lem}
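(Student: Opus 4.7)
The plan is to show uniqueness by a direct coordinate-by-coordinate argument: suppose two global transition functions $\tau_1, \tau_2$ are both induced by the same local rule $\mu^r$, and show $\tau_1(C)=\tau_2(C)$ for every $C\in Q^{\Z}$.

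First I would fix an arbitrary configuration $C \in Q^{\Z}$ and an arbitrary coordinate $i \in \Z$. By Remark 2.4, the value of a globally-induced configuration at position $i$ is determined entirely by the local transition at $i$: namely $\tau(C)(i) = \mu_i(c_i)$, where $c_i$ is the restriction of $C$ to $S_i = \{i-r,\ldots,i+r\}$. Since the CA is homogeneous and $\mu^r$ is given, $\mu_i = \mu^r$ for every $i$, so the value $\mu^r(c_i)$ is completely and unambiguously determined by $C$ and $\mu^r$ alone.

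Next, if both $\tau_1$ and $\tau_2$ are global transition functions that correspond to the same local rule $\mu^r$, then applying the relation in Remark 2.4 to each of them gives
\begin{equation*}
\tau_1(C)(i) \;=\; \mu^r(c_i) \;=\; \tau_2(C)(i) \qquad \text{for all } i\in\Z.
\end{equation*}
Since two configurations in $Q^{\Z}$ are equal precisely when they agree at every coordinate in $\Z$, this yields $\tau_1(C)=\tau_2(C)$. As $C$ was arbitrary in $Q^{\Z}$, we conclude $\tau_1=\tau_2$, which is exactly the uniqueness claim. Equivalently, one may phrase the argument constructively: the formula $\tau(C)(i):=\mu^r(c_i)$ both defines a global transition function and forces any candidate $\tau$ to coincide with it.

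There is essentially no obstacle here; the content of the lemma is that the induction of $\tau$ from $\mu^r$ is functional rather than merely relational, and this is immediate from the pointwise formula recorded in Remark 2.4. The only thing to be slightly careful about is invoking homogeneity, so that the single local rule $\mu^r$ legitimately replaces every $\mu_i$ in the product expression for $\tau(C)$; without homogeneity the statement would not even be meaningful in the form given.
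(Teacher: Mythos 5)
Your argument is correct: the defining relation of Remark 2.4, $\tau(C)(i)=\mu_i(c_i)$ with $\mu_i=\mu^r$ by homogeneity, forces the value of any candidate global function at every coordinate, so any two candidates agree pointwise and hence coincide. The paper states this lemma without proof, and your pointwise (equivalently, constructive) argument is exactly the one the surrounding remarks intend, so there is nothing to add or correct.
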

\begin{thm}
For a homogeneous CA, the global transition function of the CA is an identity function $\tau_e$ if and only if the local transition function is an identity function $\mu_{e}$. 
\end{thm}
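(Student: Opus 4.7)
The plan is to prove both implications directly from the definitions, using the fundamental link between $\tau$ and $\mu$ established in the earlier remark, namely $\tau(C)(i)=\mu_i(c_i)$ for every $i\in\Z$, together with homogeneity (all $\mu_i$ equal a common $\mu$) and the two identity conditions: $\tau_e(C)=C$ for every $C\in Q^{\Z}$, and $\mu_e(c_i)=c_i(i)$ for every $i$.

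For the forward implication ($\mu=\mu_e \Rightarrow \tau=\tau_e$), I would take an arbitrary $C\in Q^{\Z}$ and any $i\in\Z$, compute $\tau(C)(i)=\mu(c_i)=c_i(i)=C(i)$, where the last equality uses Remark 2.3(1). Since this holds for every $i$, we conclude $\tau(C)=C$, and since $C$ was arbitrary, $\tau=\tau_e$. This direction is essentially a one-line chase through definitions.

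For the converse ($\tau=\tau_e \Rightarrow \mu=\mu_e$), I would fix an arbitrary local configuration $d\in Q^{S_i}$ and observe that $d$ extends to at least one global configuration $C\in Q^{\Z}$ (by assigning arbitrary states outside $S_i$), with $c_i=d$. Then $\mu(d)=\mu_i(c_i)=\tau(C)(i)=C(i)=c_i(i)=d(i)$. Since $d\in Q^{S_i}$ was arbitrary, $\mu$ acts on every local configuration exactly as $\mu_e$ does, hence $\mu=\mu_e$.

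The only nontrivial step is the extension argument in the converse: one must note that every local configuration on $S_i$ really is the restriction of some global configuration, so that the hypothesis $\tau=\tau_e$ pins down the value of $\mu$ on \emph{all} of $Q^{S_i}$, not just on a subset. Once that observation is in place, the proof reduces entirely to substituting definitions.
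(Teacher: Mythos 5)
Your proof is correct and follows essentially the same definition-chasing route as the paper's own proof of this theorem, using $\tau(C)(i)=\mu(c_i)$, homogeneity, and $c_i(i)=C(i)$ in both directions. The only difference is that in the converse you make explicit the observation that every local configuration in $Q^{S_i}$ is the restriction of some global configuration, a point the paper leaves implicit; this is a welcome bit of added rigor rather than a genuinely different approach.
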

\begin{proof}
Let $\mu$ be the local transition function and $ \tau$be the global transition function of a homogeneous CA.\\ 
Suppose $\mu$ is an identity function. Then for any $i \in {\Z}$, we have, $\mu(c_i) = \mu_e(c_i) = c_i(i)$\\
If  $\tau(C) = C^*$ then $\forall i \in {\Z}$ we have, $$\tau(C)(i) = C^*(i) =  \mu(c_i)= c_i(i)$$
Again,  $ \forall i \in  {\Z}$ we know that, $c_i(i) = C(i).$ So, $\forall i \in  {\Z},$\\
$$C^*(i) = c_i(i) = C(i) \Leftrightarrow  \tau(C)(i) = C(i)$$
Since the CA is homogeneous, we have, $\tau(C) = C$.\\
It follows that the global transition function is an identity function and is denoted by $\tau_e$.\\
Conversely, let the global transition function be an identity function.\\
Then, $\tau(C) = \tau_e(C) = C $ Since, $i \in  {\Z}$,\\ 
 $$C(i) = c_i(i) \Leftrightarrow  \tau_e(C)(i) = \mu(c_i) = c_i(i)$$
Therefore the local transition function is an identity function.\\
Hence the theorem. 
\end{proof} 

\begin{thm}
For a homogeneous CA, the global transition function of the CA is a constant function $\tau_q$ if and only if the local transition function is a constant function $\mu_{q}$. 
\end{thm}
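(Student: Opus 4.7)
The plan is to prove both implications by directly invoking the identity $\tau(C)(i) = \mu(c_i)$ from Remark 2.3, together with Theorem 2.1, which ensures that a homogeneous CA with constant $\tau$ produces a configuration whose value is identical at every cell.

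For the forward direction, I would assume $\mu = \mu_q$, so by definition $\mu(c_i) = q$ for every $i \in \mathbb{Z}$ and every local configuration $c_i \in Q^{S_i}$. Then for any $C \in Q^{\mathbb{Z}}$, the relation $\tau(C)(i) = \mu(c_i) = q$ holds for all $i$, which shows that $\tau(C)$ is the constant configuration $C^* \in Q^{\mathbb{Z}}$ with $C^*(i) = q$ for all $i$. Since this $C^*$ does not depend on the choice of $C$, $\tau$ is the constant global transition function $\tau_q$.

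For the converse, I would assume $\tau = \tau_q$, so $\tau(C) = C^*$ for a fixed configuration $C^*$ independently of $C$. By Theorem 2.1, $C^*(i)$ takes the same value $q \in Q$ for every $i \in \mathbb{Z}$. Now fix any $i \in \mathbb{Z}$ and any $\hat{c} \in Q^{S_i}$; extend $\hat{c}$ to a global configuration $C \in Q^{\mathbb{Z}}$ by defining $C$ arbitrarily outside $S_i$. Then the restriction $c_i$ of $C$ to $S_i$ equals $\hat{c}$, and hence $\mu(\hat{c}) = \mu(c_i) = \tau(C)(i) = C^*(i) = q$. Since $\hat{c}$ was arbitrary, $\mu = \mu_q$.

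There is no real obstacle here; the only subtlety worth flagging is that the converse direction must range over \emph{all} of $Q^{S_i}$, not merely those $c_i$ arising from a single $C$. This is handled by the extension step above, and the appeal to Theorem 2.1 is what guarantees the constant value $q$ is independent of the cell index $i$, so that a single $\mu_q$ (rather than site-dependent constants) is obtained.
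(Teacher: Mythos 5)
Your proof is correct and follows essentially the same route as the paper's: both directions rest on the identity $\tau(C)(i)=\mu(c_i)$ relating the global and local transition functions. The only difference is that in the converse you make explicit the extension of an arbitrary local configuration $\hat{c}\in Q^{S_i}$ to a global configuration (and you invoke Theorem 2.1 for the spatial constancy of $C^*$), details the paper leaves implicit; this is a clarification of the same argument rather than a different one.
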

\begin{proof}Let $\mu$ be the local transition function and $ \tau$be the global transition function of a homogeneous CA.\\ Let the local transition function $\mu$ be a constant function $\mu_{q}$. Then for any $i \in {\Z}$ there exists a partcular $q \in Q$ such that\\
$$\mu(c_i) = \mu_q(c_i) = q$$
Now, if $\tau(C) = C^*$ then $\forall i \in {\Z}$ we have, $$\tau(C)(i) = C^*(i) =  \mu(c_i)=q$$
So $\forall i \in {\Z},~~ C^*(i) = q $. Thus $C^*$ is a constant configuration.\\
Therefore, for the constant configuration $C^* \in Q^{\Z}$, and forall $C \in Q^{\Z}, \tau(C) = C^*$ .\\
It follows that the global transition function is a constant function.\\
 Conversely, let the global transition function be a constant function $\tau_q$.\\
Then for all $C \in Q^{\Z},~ \tau(C) = \tau_q(C) = C^*$ for a particular configuration $C^* \in Q^{\Z}$ such that $\forall i \in {\Z}~~C^*(i)=q.$\\
Again, $\forall i \in {\Z},~~\tau(C)(i) = \mu(c_i) =q $.\\
Therefore it follows that the local transition function is a constant function and is denoted by $\mu_q$.\\
Hence the theorem.
\end{proof}
\begin{thm}
For a homogeneous CA, the global transition function of the CA is an $m-place$ left shift function $\tau_{Lm}$ if and only if the local transition function is an $m-place$ left shift function $\mu_{Lm}$ where $m \in {\N}$ is finite. 
\end{thm}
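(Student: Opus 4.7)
The plan is to exploit the two fundamental micro--macro identities already established in the excerpt: for a homogeneous CA we have $\tau(C)(i)=\mu(c_i)$ for every $i\in\Z$ (Remark 2.4), and on the neighbourhood we have $c_i(j)=C(j)$ for every $j\in S_i$ (Remark 2.3). Throughout I would work with a radius $r\ge m$, so that $i+m\in S_i$ and the value $c_i(i+m)$ is defined; the pathology when $r<m$ is exactly the one already discussed in Case~1 of the proof of Theorem~3.1 and does not arise here.

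For the forward implication, I would assume $\mu=\mu_{Lm}$ and compute, for an arbitrary $C\in Q^{\Z}$ and arbitrary $i\in\Z$,
\[
\tau(C)(i)=\mu(c_i)=\mu_{Lm}(c_i)=c_i(i+m)=C(i+m)=\tau_{Lm}(C)(i),
\]
where each equality uses, in order, Remark 2.4, the hypothesis, the definition of $\mu_{Lm}$, Remark 2.3 (with $i+m\in S_i$), and the definition of $\tau_{Lm}$. Since this holds for all $i$ and all $C$, $\tau=\tau_{Lm}$.

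For the converse I would assume $\tau=\tau_{Lm}$ and run essentially the same chain backwards: for any $i$ and any $C$,
\[
\mu(c_i)=\tau(C)(i)=\tau_{Lm}(C)(i)=C(i+m)=c_i(i+m)=\mu_{Lm}(c_i).
\]
This shows $\mu$ and $\mu_{Lm}$ agree on every local configuration that arises as the restriction of some global $C$.

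The only genuine obstacle is making sure the last step determines $\mu$ on \emph{all} of $Q^{S_i}$, not merely on restrictions of global configurations. This is handled by the observation that every local configuration $c\in Q^{S_i}$ is the restriction of \emph{some} $C\in Q^{\Z}$: given $c$, extend it to $\Z$ by assigning arbitrary values to the cells outside $S_i$. Hence the identity $\mu(c)=\mu_{Lm}(c)$ extends to every $c\in Q^{S_i}$, completing the equality $\mu=\mu_{Lm}$ and hence the theorem. The whole argument is essentially a two-line unwinding of the definitions, with the ``extendability of local configurations'' remark being the only subtle point.
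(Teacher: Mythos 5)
Your proof is correct and follows essentially the same route as the paper's: both directions are a direct unwinding of the identities $\tau(C)(i)=\mu(c_i)$ and $c_i(j)=C(j)$ under the standing assumption $m\le r$. Your closing observation that every local configuration extends to a global one (so that $\mu=\mu_{Lm}$ on all of $Q^{S_i}$) is a small point the paper leaves implicit, but it does not change the argument.
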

\begin{proof}Let $\mu^r$ be the local transition function and $ \tau$be the global transition function of a homogeneous CA. \\Let the local transition function be an $m-place$ left shift function where $m \leq r$.Then for any $i \in {\Z}$,if $\tau(C) = C^*$\\
$$\mu^r(c_i) =( \mu^r)_{Lm}(c_i) = c_i(i+m)=C(i+m)$$
$$\tau(C)(i) = C^*(i) =  \mu^r(c_i)= C(i+m)$$
Hence it follows that the global transition function is an $m-place$ left shift function where $m \leq r$ such that $r \in {\N}$ is finite.\\
Conversely, let the global transition function be an $m-place$ left shift function.\\
Then, $$\forall C \in Q^{\Z}, ~\forall i \in {\Z},~~\tau(C)(i) = \tau_{Lm}(C)(i) = C(i+m)  $$
However,$$\forall i \in {\Z},m \leq r ~ \mu^r(c_i)=\tau(C)(i) =  C(i+m)=c_i(i+m)$$
Thus it follows that  $\mu$ the local transition function of the CA  is an $m-place$ left shift function where $m \in {\N}$ is finite.\\
Hence the theorem.
\end{proof}
Similarly we have the following result which we state without proof.
\begin{thm}
For a homogeneous CA, the global transition function of the CA is an $m-place$ right shift function $\tau_{Rm}$ if and only if the local transition function is an $m-place$ right shift function $\mu_{Rm}$ where $m \in {\N}$ is finite. 
\end{thm}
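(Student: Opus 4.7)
The plan is to imitate the proof of Theorem 3.4 verbatim, with the direction of the shift reversed. The two underlying ingredients are Remark 2.4 (part 2), which states that $\tau(C)(i)=\mu(c_i)$, and Remark 2.3 (part 2), which states that $c_i(j)=C(j)$ whenever $j\in S_i=\{i-r,\ldots,i+r\}$. Together they reduce the theorem to a bookkeeping identity linking $\mu_{Rm}$ and $\tau_{Rm}$ via these two remarks.

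For the forward direction I would fix a homogeneous CA whose local transition function $\mu^{r}$ is an $m$-place right shift, where $m\leq r$ so that the index $i-m$ really lies inside $S_i$. By Definition 2.9 we then have $\mu^{r}(c_i)=(\mu^{r})_{Rm}(c_i)=c_i(i-m)=C(i-m)$, and by Remark 2.4 this gives
\[
\tau(C)(i) \;=\; \mu^{r}(c_i) \;=\; C(i-m)
\]
for every $i\in\Z$ and every $C\in Q^{\Z}$, which is exactly the defining property of $\tau_{Rm}$. For the converse I would assume $\tau=\tau_{Rm}$, so that $\tau(C)(i)=C(i-m)$ for all $i$. Combining this with $\tau(C)(i)=\mu^{r}(c_i)$ and $C(i-m)=c_i(i-m)$ (for any $r\geq m$) yields $\mu^{r}(c_i)=c_i(i-m)$, which is precisely the definition of $\mu_{Rm}$.

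There is essentially no real obstacle: the argument is a symmetric rewrite of the proof of Theorem 3.4 with $i+m$ replaced by $i-m$ throughout. The only point that deserves explicit mention is the radius constraint $m\leq r$, needed so that $i-m\in S_i$ and hence $c_i(i-m)$ is defined; this is the same non-uniqueness phenomenon already observed in Theorem 3.1, and it is accommodated by the author's convention of writing the local rule as $\mu^{r}$ for some sufficiently large finite $r\in\N$.
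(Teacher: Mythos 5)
Your proof is correct and is exactly what the paper intends: the paper states this theorem without proof, remarking only that it follows "similarly" to the left-shift case (Theorem 3.4), and your argument is precisely that mirrored rewrite with $i+m$ replaced by $i-m$, including the sensible remark on the radius constraint $m\leq r$.
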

\section{Inverse of a Homogeneous Cellular Automaton}
\begin{defn} Let a class of CA be given by $(Q, Q^{\Z}, \tau)$. Any global transition function $\tau^{-1} \in (Q^{\Z})^{Q^{\Z}} $such that for all $C_{i}, C_{j} \in Q^Z$,\\ $$\tau(C_{i}) =C_{j} \Leftrightarrow \tau^{-1}(C_{j}) =C_{i}$$ is called the inverse of the global transition function $\tau$. \\Consequently, a CA given by $(Q, Q^{\Z}, \tau^{-1})$ is the inverse of the CA $(Q, Q^{\Z}, \tau)$.
\end{defn}
\begin{thm}
The inverse of an identity function is an identity function.    
\end{thm}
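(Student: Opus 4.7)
The plan is to proceed directly from the definition of the inverse given in Definition 4.1, together with the definition of the identity global transition function from Definition 2.5. The statement is that if $\tau = \tau_e$, where $\tau_e(C)=C$ for every $C \in Q^{\Z}$, then an inverse $\tau_e^{-1}$ exists and is itself the identity $\tau_e$.

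First I would note existence: since for every $C \in Q^{\Z}$ there is a (unique) preimage under $\tau_e$, namely $C$ itself, the relation $\tau_e(C_i)=C_j \Leftrightarrow \tau_e^{-1}(C_j)=C_i$ unambiguously defines a function $\tau_e^{-1}: Q^{\Z}\to Q^{\Z}$. Next I would identify this function explicitly. Take an arbitrary $C \in Q^{\Z}$ and set $C_j = C$. Then $C_i = C$ is the unique configuration satisfying $\tau_e(C_i)=C_j$, because $\tau_e(C)=C$. Applying Definition 4.1 gives $\tau_e^{-1}(C)=C$, and since $C$ was arbitrary this holds for every $C \in Q^{\Z}$.

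By Definition 2.5 this exactly says that $\tau_e^{-1}$ is the identity global transition function, and hence (invoking Theorem 3.3 if one wishes to descend to the local level) the corresponding local transition function is $\mu_e$, so the inverse CA is itself the identity CA. There is no real obstacle here: the whole argument is a direct unwinding of the definition of inverse on the one configuration-class $\{C \mapsto C\}$, and the only thing to be careful about is confirming that $\tau_e$ is bijective on $Q^{\Z}$ so that the inverse is well-defined as a function, which is immediate since the identity map on any set is a bijection.
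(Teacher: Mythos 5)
Your proposal is correct and follows essentially the same route as the paper's own proof: a direct application of Definition 4.1 to $\tau_e$, using $\tau_e(C)=C$ to conclude $\tau_e^{-1}(C)=C$ for every $C\in Q^{\Z}$. Your added remark that the identity is a bijection (so the inverse is well-defined as a function) is a small extra precaution the paper omits, but the substance of the argument is the same.
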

\begin{proof}
Let us consider an identity global transition function $\tau_{e}$.\\
Therefore , $\forall C \in Q^{\Z}$,$$\tau_e(C) = C  $$
If $\tau_e^{~-1}$ denotes the inverse the of $\tau_e$, then we have,$$\tau_e(C) =C \Leftrightarrow \tau_e^{-1}(C) =C$$
Thus, for all $C \in Q^{\Z}$, $$\tau_e(C) = \tau_e^{-1}(C) \Leftrightarrow \tau_e =\tau_e^{-1}$$
Hence the theorem.
\end{proof} 
\begin{thm} For a binary homogeneous CA the inverse of the complementary transition function is the function itself.
\end{thm}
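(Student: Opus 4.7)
The plan is to show that the complementary transition function $\tau^c$ is an involution, i.e.\ applying it twice returns every configuration to itself. Once this is established, the claim follows immediately from the definition of inverse given just above.

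First I would unpack what $\tau^c$ does on an arbitrary configuration. Fix $C_1 \in Q^{\Z}$ with $Q=\{0,1\}$ and set $C_2 = \tau^c(C_1)$. By the definition of the complementary transition function, for every $i \in \Z$ we have $C_1(i)=0 \Leftrightarrow C_2(i)=1$ and $C_1(i)=1 \Leftrightarrow C_2(i)=0$, which in binary arithmetic is just $C_2(i)=1-C_1(i)$.

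Next I would apply $\tau^c$ to $C_2$. Let $C_3 = \tau^c(C_2)$. Using the same bit-flip relation, $C_3(i) = 1 - C_2(i) = 1 - (1 - C_1(i)) = C_1(i)$ for every $i \in \Z$, so $C_3 = C_1$. Thus we have produced the pair $\tau^c(C_1)=C_2$ and $\tau^c(C_2)=C_1$, which matches exactly the defining biconditional of the inverse given in Definition~4.1 with $\tau$ and $\tau^{-1}$ both taken to be $\tau^c$. Hence $(\tau^c)^{-1} = \tau^c$.

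There is no serious obstacle here; the only thing to be careful about is that the defining biconditional of the inverse must be verified for \emph{every} pair of configurations related by $\tau^c$, not just a single one. This is handled by the fact that the calculation $C_3(i) = C_1(i)$ holds for every $i$ and every choice of starting configuration $C_1 \in Q^{\Z}$, so the argument is genuinely uniform over $Q^{\Z}$. No appeal to homogeneity or to the local transition function is needed — the claim is really a statement about the involutivity of complementation on $\{0,1\}$ lifted pointwise to $\{0,1\}^{\Z}$.
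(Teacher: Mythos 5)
Your proof is correct and follows essentially the same route as the paper: both rest on the observation that if $\tau^c(C_1)=C_2$ then $\tau^c(C_2)=C_1$, which together with Definition~4.1 immediately gives $(\tau^c)^{-1}=\tau^c$. The only difference is that you explicitly verify this symmetry by the pointwise computation $C_2(i)=1-C_1(i)$, whereas the paper simply asserts it ``by definition''; your version just fills in that small detail.
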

\begin{proof}By definition, $$\tau^c(C_1) = C_2 \leftrightarrow \tau^c(C_2)= C_1$$ Hence $(\tau^c)^{-1} =\tau^c$
\end{proof}
\begin{thm}
The inverse of an $m-place$ left shift global transition function $\tau_{Lm}$ where $m \in {\N}$ is finite, is an $m-place$ right shift global transition function $\tau_{Rm}$ and vice-versa.    
\end{thm}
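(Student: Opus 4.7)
The plan is to verify the biconditional in Definition 4.1 directly from the defining formulas of the shift functions. Given any $C_1 \in Q^{\Z}$, set $C_2 = \tau_{Lm}(C_1)$, which by Definition 2.6 means $C_2(i) = C_1(i+m)$ for every $i \in \Z$. I would then apply $\tau_{Rm}$ to $C_2$ and evaluate pointwise: by Definition 2.7, $\tau_{Rm}(C_2)(i) = C_2(i-m)$, and substituting the description of $C_2$ gives $\tau_{Rm}(C_2)(i) = C_1((i-m)+m) = C_1(i)$. Thus $\tau_{Rm}(C_2) = C_1$.

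The converse direction is carried out symmetrically: starting from $\tau_{Rm}(C_2) = C_1$, applying $\tau_{Lm}$ pointwise yields $\tau_{Lm}(C_1)(i) = C_1(i+m) = \tau_{Rm}(C_2)(i+m) = C_2((i+m)-m) = C_2(i)$, so $\tau_{Lm}(C_1) = C_2$. Together these two computations establish precisely the biconditional $\tau_{Lm}(C_1) = C_2 \Leftrightarrow \tau_{Rm}(C_2) = C_1$ demanded by Definition 4.1, which by definition means $\tau_{Rm} = \tau_{Lm}^{-1}$. The ``vice-versa'' half is the same statement read in the other direction, since the relation being established is symmetric in $\tau_{Lm}$ and $\tau_{Rm}$.

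There is no real obstacle here; the proof is a single line of index arithmetic on $\Z$ combined with the definitions of $\tau_{Lm}$ and $\tau_{Rm}$. The only point worth being careful about is organizing the argument so that it is clear both implications of the biconditional have been verified, rather than only exhibiting $\tau_{Rm}\circ\tau_{Lm} = \tau_e$ (which in principle requires knowing that $\tau_{Lm}$ is a bijection before one can conclude $\tau_{Rm} = \tau_{Lm}^{-1}$ from a one-sided identity). Working directly from Definition 4.1 as above avoids that subtlety entirely.
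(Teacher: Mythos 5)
Your proposal is correct and rests on essentially the same one-line index arithmetic as the paper's proof of this theorem. If anything, your write-up is tighter: you verify both implications of the biconditional in Definition 4.1 on a pair $(C_1,C_2)$ with $C_2=\tau_{Lm}(C_1)$, whereas the paper re-indexes the defining identities of $\tau_{Lm}$ and $\tau_{Rm}$ applied to a single configuration $C$ and states the resulting equivalence more loosely.
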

\begin{proof}
Let us consider an \textit{m-place} left shift global transition function $\tau_{Lm}$.\\ 
Then,$$\forall i \in {\Z},~ \tau_{Lm}(C)(i) = C(i+m)$$
So, $$\tau_{Lm}(C)(i-m) = C(i-m+m) = C(i)$$
Again, for an \textit{m-place} right shift global transition function $\tau_{Rm}$,we have,
$$\forall i \in {\Z},~ \tau_{Rm}(C)(i) = C(i-m) $$
So, $$\tau_{Rm}(C)(i+m) = C(i+m-m) = C(i)$$
Thus ,$\forall i \in {\Z}$, $$\tau_{Lm}(C)(i) = C(i+m) \Leftrightarrow \tau_{Rm}(C)(i+m) = C(i)$$
Therefore, we conclude that the inverse of $\tau_{Lm}$ is $\tau_{Rm}$.\\
Hence, the theorem. 
\end{proof}	
\section{Binary Operations on One Dimensional Cellular Automata with the Same State Set}
In this section we discuss the properties of binary operations on the set of CA with the same state set $Q$. Let, $\mathscr{Q}^{\Z}= (Q^{\Z})^{Q^{\Z}}$. Thus any element of the set $\mathscr{Q}^{\Z}$ is a global transition function or often called a CA with the state set $Q$.
\begin {defn} A binary operation $*  $ on $\mathscr{Q} ^{\Z}$ is defined as follows:\\ If $\tau_1 ~and ~\tau_2 \in \mathscr{Q}^{\Z}$, then $$\forall C \in Q^{\Z}, (\tau_1 * \tau_2 )(C) = \tau_1(\tau_2(C))$$
\end{defn}
\begin{lem} For a particular state set $Q,~ \tau_e$, the global identity transition function belongs to $\mathscr{Q}^{\Z}$.
\end{lem}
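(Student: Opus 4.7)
The plan is to verify the claim by a direct unwinding of the two definitions involved. By definition, $\mathscr{Q}^{\Z} = (Q^{\Z})^{Q^{\Z}}$ is the set of \emph{all} mappings $\tau : Q^{\Z} \to Q^{\Z}$, so establishing $\tau_e \in \mathscr{Q}^{\Z}$ amounts to checking that the global identity transition function really is a well-defined function with domain and codomain both equal to $Q^{\Z}$.

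First I would recall Definition 2.5, which specifies $\tau_e$ by the rule $\tau_e(C) = C$ for every $C \in Q^{\Z}$. This rule is total, because it assigns an output to every $C \in Q^{\Z}$, and single-valued, because each $C$ has the unique image $C$ itself. Next I would note that the output $\tau_e(C) = C$ belongs to $Q^{\Z}$ by hypothesis on $C$, so the codomain condition is satisfied. Together these two observations give $\tau_e : Q^{\Z} \to Q^{\Z}$, which is exactly the condition for membership in $(Q^{\Z})^{Q^{\Z}} = \mathscr{Q}^{\Z}$.

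Since $Q$ is fixed, no dependence on the state set enters beyond the statement that $Q^{\Z}$ is the domain of discussion; in particular the fact that $Q$ is finite plays no role here. There is no real obstacle: the lemma is essentially a sanity check that the previously introduced object $\tau_e$ lives in the ambient set $\mathscr{Q}^{\Z}$ on which the binary operation $*$ of Definition 5.1 has just been declared. The only thing worth emphasizing is that the argument is purely set-theoretic and uses none of the homogeneity or locality structure from the earlier sections, so the verification is short.
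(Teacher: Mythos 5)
Your verification is correct: the paper states this lemma without proof, treating it as immediate from the definitions, and your direct unwinding---$\tau_e$ is a total, single-valued map from $Q^{\Z}$ to $Q^{\Z}$, hence an element of $(Q^{\Z})^{Q^{\Z}} = \mathscr{Q}^{\Z}$---is exactly the intended (and only natural) argument. Nothing is missing.
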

\begin{lem} For a particular state set $Q,~\forall m,n  \in \N,~\tau_{Lm}, \tau_{Rn}$, the global m-place left shift transition function and global n-place right shift function belongs to $\mathscr{Q}^{\Z}$.
\end{lem}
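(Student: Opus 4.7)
The plan is to simply verify that both $\tau_{Lm}$ and $\tau_{Rn}$, as defined in Definitions 2.7 and 2.8, really are well-defined maps from $Q^{\Z}$ into $Q^{\Z}$, since that is precisely the condition for membership in $\mathscr{Q}^{\Z} = (Q^{\Z})^{Q^{\Z}}$.

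First I would take an arbitrary $C \in Q^{\Z}$ and define $C^* := \tau_{Lm}(C)$ pointwise by $C^*(i) := C(i+m)$ for all $i \in \Z$, as prescribed. The key observation is that since $m \in \N$ is finite and $\Z$ is closed under addition, $i+m \in \Z$ for every $i \in \Z$; therefore $C(i+m) \in Q$ is defined for every $i$, and hence $C^* \colon \Z \to Q$ is a total function, i.e.\ $C^* \in Q^{\Z}$. This shows $\tau_{Lm}$ sends each input configuration to a bona fide configuration. Uniqueness of the output (so that $\tau_{Lm}$ is a function, not just a relation) is immediate from the pointwise formula. Hence $\tau_{Lm} \in (Q^{\Z})^{Q^{\Z}} = \mathscr{Q}^{\Z}$.

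The argument for $\tau_{Rn}$ is entirely symmetric: for $C \in Q^{\Z}$, define $C^{**}(i) := C(i-n)$; again $i - n \in \Z$ for every $i \in \Z$ because $n \in \N$, so $C^{**} \in Q^{\Z}$, and $\tau_{Rn}$ is a well-defined map $Q^{\Z} \to Q^{\Z}$, i.e.\ an element of $\mathscr{Q}^{\Z}$.

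There is no real obstacle here; the only subtlety worth flagging explicitly is the use of the fact that the underlying grid is $\Z$ (bi-infinite), so that shifts in either direction never run off the boundary. The result would fail verbatim for CA indexed by $\N$, but in the present setting the closure of $\Z$ under addition and subtraction by a natural number makes both shift operations total.
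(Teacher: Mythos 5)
Your proof is correct: the paper states this lemma without any proof, treating it as immediate, and your verification that $\tau_{Lm}$ and $\tau_{Rn}$ send every $C\in Q^{\Z}$ to a well-defined configuration (because $i+m$ and $i-n$ remain in $\Z$) is exactly the routine argument being implicitly relied upon. Your remark that the claim hinges on the grid being the bi-infinite lattice $\Z$, and would fail for an $\N$-indexed grid, is a sensible observation consistent with the paper's setting.
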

\begin{lem}For a particular state set $Q,~\mathscr{Q}^{\Z}$ is closed under the binary operation $*$.
\end{lem}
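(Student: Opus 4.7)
The plan is to verify directly from the definitions that the composition $\tau_1 * \tau_2$ of any two elements of $\mathscr{Q}^{\Z}$ is itself an element of $\mathscr{Q}^{\Z}$. Since $\mathscr{Q}^{\Z} = (Q^{\Z})^{Q^{\Z}}$ was defined as the set of \emph{all} mappings $Q^{\Z} \to Q^{\Z}$, closure under $*$ reduces to the standard fact that the composition of two self-maps of a set is again a self-map of that set; the key is just to unpack the definition carefully.

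First I would fix arbitrary $\tau_1, \tau_2 \in \mathscr{Q}^{\Z}$ and consider the candidate map $\tau_1 * \tau_2$ given by $(\tau_1 * \tau_2)(C) = \tau_1(\tau_2(C))$ for every $C \in Q^{\Z}$, as prescribed by Definition 5.1. I then need to check two things: (i) the expression $\tau_1(\tau_2(C))$ is defined for every $C \in Q^{\Z}$ and its value lies in $Q^{\Z}$; and (ii) the assignment $C \mapsto \tau_1(\tau_2(C))$ is single-valued, so that $\tau_1 * \tau_2$ is genuinely a function rather than a relation.

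For (i), since $\tau_2 \colon Q^{\Z} \to Q^{\Z}$, the intermediate value $\tau_2(C)$ is a well-defined element of $Q^{\Z}$, which in particular lies in the domain of $\tau_1$; then because $\tau_1$ also maps $Q^{\Z}$ into $Q^{\Z}$, the output $\tau_1(\tau_2(C))$ is an element of $Q^{\Z}$. Point (ii) is inherited directly from the single-valuedness of $\tau_1$ and $\tau_2$: if $C = C'$ then $\tau_2(C) = \tau_2(C')$ and hence $\tau_1(\tau_2(C)) = \tau_1(\tau_2(C'))$. Combining (i) and (ii), the map $\tau_1 * \tau_2 \colon Q^{\Z} \to Q^{\Z}$ is well-defined, which is exactly membership in $\mathscr{Q}^{\Z}$.

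There is no real obstacle here; the statement is a sanity check confirming that $*$ from Definition 5.1 is indeed a binary operation on $\mathscr{Q}^{\Z}$, and it sets up the subsequent group-theoretic arguments (associativity of $*$, the role of $\tau_e$ as identity by Lemma 5.1, and inverses of shift CA established in Section 4) needed to reach the abelian-group conclusion announced in the abstract.
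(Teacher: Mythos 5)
Your argument is correct and follows essentially the same route as the paper: fix $\tau_1,\tau_2\in\mathscr{Q}^{\Z}$, note that $\tau_2(C)\in Q^{\Z}$ and hence $\tau_1(\tau_2(C))\in Q^{\Z}$, so $\tau_1*\tau_2$ is again a self-map of $Q^{\Z}$. Your additional remark on single-valuedness is a harmless extra check not present in the paper's (shorter) proof.
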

\begin{proof}As $\tau_1 ~and ~\tau_2 \in \mathscr{Q}^{\Z}$, $\forall C_i \in Q^{\Z}, \exists C_j\in Q^{\Z}$ so that $\tau_2(C_i) = C_j$. Hence,
$$ (\tau_1 * \tau_2 )(C_i) = \tau_1(\tau_2(C_i))= \tau_1(C_j) \in Q^{\Z}$$
So, $ (\tau_1 * \tau_2 ) \in \mathscr{Q}^{\Z}$
\end{proof}
\begin{lem} For a particular state set $Q,~ *$, the binary operation on $\mathscr{Q}^{\Z}$ is associative.
\end{lem}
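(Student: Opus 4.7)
The plan is to reduce the claim to the associativity of ordinary function composition, since the operation $*$ on $\mathscr{Q}^{\Z}$ is defined as $(\tau_1 * \tau_2)(C) = \tau_1(\tau_2(C))$, i.e. the usual composition of maps from $Q^{\Z}$ to $Q^{\Z}$. By Lemma 5.3, $\mathscr{Q}^{\Z}$ is closed under $*$, so the two compound expressions $(\tau_1 * \tau_2) * \tau_3$ and $\tau_1 * (\tau_2 * \tau_3)$ genuinely lie in $\mathscr{Q}^{\Z}$ and it is meaningful to compare them.

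The concrete steps I would carry out are these. Fix arbitrary $\tau_1, \tau_2, \tau_3 \in \mathscr{Q}^{\Z}$ and an arbitrary configuration $C \in Q^{\Z}$. Unfolding the definition of $*$ once on the outer operation and once on the inner, compute
\[
\bigl((\tau_1 * \tau_2) * \tau_3\bigr)(C) \;=\; (\tau_1 * \tau_2)\bigl(\tau_3(C)\bigr) \;=\; \tau_1\Bigl(\tau_2\bigl(\tau_3(C)\bigr)\Bigr).
\]
In the same way, on the other grouping,
\[
\bigl(\tau_1 * (\tau_2 * \tau_3)\bigr)(C) \;=\; \tau_1\bigl((\tau_2 * \tau_3)(C)\bigr) \;=\; \tau_1\Bigl(\tau_2\bigl(\tau_3(C)\bigr)\Bigr).
\]
Both right-hand sides are literally the same expression, so the two sides agree pointwise on every $C \in Q^{\Z}$, and by extensionality of functions the two global transition functions $(\tau_1 * \tau_2) * \tau_3$ and $\tau_1 * (\tau_2 * \tau_3)$ are equal as elements of $\mathscr{Q}^{\Z}$.

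There is no genuine obstacle here: the entire content of the lemma is that composition of self-maps of a set is associative, a fact that holds for any set whatsoever and in particular for $Q^{\Z}$. The only things worth mentioning in the writeup are (i) being careful to apply the definition of $*$ in the correct order on each side, and (ii) noting that associativity is a statement about equality of functions, so it is enough to verify it on an arbitrary argument $C$.
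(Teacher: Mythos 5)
Your proof is correct and follows essentially the same route as the paper: unfold the definition of $*$ on both groupings and observe that each side evaluates to $\tau_1\bigl(\tau_2(\tau_3(C))\bigr)$ for every $C \in Q^{\Z}$, hence the functions coincide. Your writeup is in fact a cleaner rendering of the paper's own computation, with the pointwise-equality-implies-function-equality step made explicit.
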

\begin{proof}Let $\tau_1 , \tau_2 ~and ~\tau_3 \in \mathscr{Q}^{\Z}$, then $ \forall C \in Q^{\Z}$,$$\tau_1* ((\tau_2 * \tau_3 )(C) )= \tau_1*((\tau_2(\tau_3(C)))= \tau_1((\tau_2(\tau_3(C))$$
$$((\tau_1* \tau_2) * \tau_3 )(C)  = ( \tau_1*\tau_2)\tau_3(C)= \tau_1((\tau_2(\tau_3(C))$$
\end{proof}
\begin{lem}For a binary CA if $\tau^c$ is the complementary transition function then $\ \tau^c \in \mathscr{Q}^{\Z}$.
\end{lem}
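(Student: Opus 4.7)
The plan is to show directly from the definition that $\tau^c$ is a well-defined mapping from $Q^{\Z}$ into $Q^{\Z}$, since membership in $\mathscr{Q}^{\Z} = (Q^{\Z})^{Q^{\Z}}$ is precisely this property.

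First I would fix an arbitrary $C_1 \in Q^{\Z}$ and produce its image. Because the CA is binary, $Q = \{0,1\}$, so for every $i \in \Z$ the value $C_1(i)$ is either $0$ or $1$. Define $C_2 : \Z \rightarrow Q$ by $C_2(i) = 1$ whenever $C_1(i) = 0$, and $C_2(i) = 0$ whenever $C_1(i) = 1$. Since exactly one of the two cases occurs at each $i$, this $C_2$ is defined everywhere on $\Z$ and takes values in $Q$, so $C_2 \in Q^{\Z}$. By construction $C_2$ satisfies the bit-flip relation in Definition 2.8, hence $\tau^c(C_1) = C_2$.

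Next I would verify that the assignment $C_1 \mapsto C_2$ is single-valued. If $C_2$ and $C_2'$ both satisfied the complementation relation with respect to the same $C_1$, then for every $i \in \Z$ we would have $C_2(i) = 1 - C_1(i) = C_2'(i)$, forcing $C_2 = C_2'$. Therefore $\tau^c$ is a function, and combined with the previous paragraph, $\tau^c : Q^{\Z} \rightarrow Q^{\Z}$.

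There is no real obstacle here; the only thing to be careful about is not to confuse the definition of $\tau^c$ (which is phrased as a biconditional in Definition 2.8) with a mere relation. The content of the proof is precisely the observation that the biconditional, applied pointwise over the binary alphabet, determines $C_2$ uniquely from $C_1$ and yields an element of $Q^{\Z}$, so $\tau^c \in \mathscr{Q}^{\Z}$.
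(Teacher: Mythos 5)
Your argument is correct: membership in $\mathscr{Q}^{\Z}$ means precisely being a map $Q^{\Z}\rightarrow Q^{\Z}$, and your pointwise construction of the complement together with the single-valuedness check establishes exactly that. The paper itself states this lemma without proof, treating it as immediate from Definition 2.8 and the definition of $\mathscr{Q}^{\Z}$, so your write-up simply supplies the routine verification the authors elide.
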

\begin{lem} For a particular state set $Q,~\forall m,n  \in \N,~\tau_{Lm}, \tau_{Rn}$, the global m-place left shift transition function and global n-place right shift function then 
\begin{enumerate}
\item $\tau_{Lm}*\tau_{Rn}= \tau_{Rn}*\tau_{Lm}$.\\
\item If, $m>n, ~\tau_{Lm}*\tau_{Rn}$ is a left  shift function.\\
\item If $n>m, ~\tau_{Lm}*\tau_{Rn}$ is a right shift function.
\end{enumerate}
\end{lem}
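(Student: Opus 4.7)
The plan is to unwind the composition pointwise using only the definitions of $\tau_{Lm}$ and $\tau_{Rn}$ already recorded in Definitions~2.6 and~2.7, and then read off the three claims from a single formula. The computation is short enough that associativity and commutativity of addition on $\Z$ do essentially all the work.

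First I would fix an arbitrary configuration $C \in Q^{\Z}$ and an arbitrary index $i \in \Z$, and compute
\[
(\tau_{Lm} * \tau_{Rn})(C)(i) \;=\; \tau_{Lm}\bigl(\tau_{Rn}(C)\bigr)(i) \;=\; \tau_{Rn}(C)(i+m) \;=\; C\bigl((i+m)-n\bigr) \;=\; C(i+m-n),
\]
where each equality uses, in order, the definition of $*$, the definition of $\tau_{Lm}$, and the definition of $\tau_{Rn}$. The analogous computation with the factors swapped gives
\[
(\tau_{Rn} * \tau_{Lm})(C)(i) \;=\; \tau_{Lm}(C)(i-n) \;=\; C\bigl((i-n)+m\bigr) \;=\; C(i+m-n).
\]
Since $C$ and $i$ are arbitrary, the two global functions agree on every coordinate of every configuration, which yields part~(1).

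For parts~(2) and~(3) I would split on the sign of $m-n$ in the common formula $C(i+m-n)$. If $m>n$, set $k = m-n \in \N$; then $(\tau_{Lm}*\tau_{Rn})(C)(i) = C(i+k)$ for every $i\in\Z$, which is exactly the defining relation $\tau_{Lk}(C)(i)=C(i+k)$ of a $k$-place left shift, so $\tau_{Lm}*\tau_{Rn} = \tau_{L(m-n)}$. Symmetrically, if $n>m$, set $k=n-m \in \N$; then $C(i+m-n) = C(i-k)$, matching the definition of $\tau_{Rk}$, so $\tau_{Lm}*\tau_{Rn} = \tau_{R(n-m)}$.

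There is no real obstacle in this proof; the one subtlety worth flagging is that the lemma omits the case $m=n$, which the same formula covers and gives $\tau_{Lm}*\tau_{Rn}(C)(i) = C(i)$, i.e.\ the identity $\tau_e$, consistent with Theorem~4.3. I would make sure in the write-up that the step passing from ``$(\tau_{Lm}*\tau_{Rn})(C)(i)=C(i+k)$ for all $i,C$'' to ``$\tau_{Lm}*\tau_{Rn}=\tau_{Lk}$'' is justified by pointing out that two global transition functions that agree on every coordinate of every configuration are equal as maps $Q^{\Z}\to Q^{\Z}$, so no separate appeal to homogeneity or to the local transition function is needed.
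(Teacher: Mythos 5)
Your proposal is correct and follows essentially the same route as the paper's own proof: a pointwise computation showing $(\tau_{Lm}*\tau_{Rn})(C)(i)=C(i+m-n)=(\tau_{Rn}*\tau_{Lm})(C)(i)$, then reading off left or right shift from the sign of $m-n$. Your write-up is in fact slightly more careful than the paper's (explicitly identifying the composite as $\tau_{L(m-n)}$ or $\tau_{R(n-m)}$ and noting the omitted $m=n$ case gives $\tau_e$), but the underlying argument is the same.
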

\begin{proof}$\forall C \in Q^{\Z}, $ we have
$$(\tau_{Lm}*\tau_{Rn})C(i)= \tau_{Lm}(C(i-n))=C(i-n+m)~and~ (\tau_{Rn}*\tau_{Lm})C(i)=\tau_{Rn}(C(i+m))=C(i+m-n)$$Hence we get result 1. If $m-n =p, ~\tau_{Lm}*\tau_{Rn}$ is a left  shift function when $p> 0$ and right shift when $p<0$.
\end{proof}
Lemma 5.1, 5.3 and 5.4 gives the following theorem.
 \begin{thm}$ \langle \mathscr{Q}^{\Z}, * \rangle $  forms a monoid.
\end{thm}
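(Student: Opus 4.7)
The plan is simply to assemble the three ingredients that a monoid structure requires: closure, associativity, and a two-sided identity, and verify that each is already supplied (up to a minor check) by the preceding lemmas.

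First I would cite Lemma 5.3 to obtain that $\mathscr{Q}^{\Z}$ is closed under the composition operation $*$, and then cite Lemma 5.4 to obtain associativity of $*$ on $\mathscr{Q}^{\Z}$. These two facts alone make $\langle \mathscr{Q}^{\Z}, * \rangle$ a semigroup.

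To upgrade the semigroup to a monoid, I would invoke Lemma 5.1 to note that the identity global transition function $\tau_e$ is an element of $\mathscr{Q}^{\Z}$, and then verify explicitly that $\tau_e$ is a two-sided identity for $*$. This step is the one not immediately recorded as a separate lemma, so it deserves a line of calculation: for any $\tau \in \mathscr{Q}^{\Z}$ and any $C \in Q^{\Z}$, by the definition of $*$ and of $\tau_e$,
\[
(\tau_e * \tau)(C) = \tau_e(\tau(C)) = \tau(C), \qquad (\tau * \tau_e)(C) = \tau(\tau_e(C)) = \tau(C),
\]
so $\tau_e * \tau = \tau * \tau_e = \tau$, giving the required identity property.

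Combining closure, associativity, and the existence of the two-sided identity $\tau_e$ then yields the theorem. There is no real obstacle here; the only substantive content beyond citing the three lemmas is the one-line verification that $\tau_e$ actually behaves as the identity under $*$, which follows directly from the definitions. Note that the proof does not use commutativity, which is appropriate since in general $\mathscr{Q}^{\Z}$ need not be abelian (only the shift subfamily considered in Lemma 5.7 is shown to commute).
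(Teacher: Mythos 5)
Your proof is correct and follows essentially the same route as the paper, which derives the theorem directly from Lemma 5.1 (identity element), Lemma 5.3 (closure) and Lemma 5.4 (associativity). The only addition is your explicit one-line check that $\tau_e * \tau = \tau * \tau_e = \tau$, a verification the paper leaves implicit, and which is a welcome bit of completeness.
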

From Lemma 5.5, Theorem 4.2 and Lemma 5.3 we have the following,
\begin{thm}For a binary CA if $\tau^c$ is the complementary transition function then $ \langle \{\tau_e,\tau^c \}, * \rangle $ forms a group and is the smallest such nontrivial group.
\end{thm}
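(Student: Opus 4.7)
The plan is to check the group axioms on the two-element set $\{\tau_e,\tau^c\}$ using the results already in hand, and then argue minimality by a counting/containment argument.

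First I would set up the Cayley table on $\{\tau_e,\tau^c\}$. By Lemma 5.1 the identity $\tau_e$ lies in $\mathscr{Q}^{\Z}$, and Lemma 5.5 places $\tau^c$ there as well, so both symbols make sense. The products $\tau_e*\tau_e=\tau_e$, $\tau_e*\tau^c=\tau^c$, and $\tau^c*\tau_e=\tau^c$ follow directly from Definition 5.1 and the identity property proved in Theorem 4.1 (or just from $\tau_e(C)=C$). The only nontrivial entry is $\tau^c*\tau^c$: by Theorem 4.2 we have $(\tau^c)^{-1}=\tau^c$, so applying $\tau^c$ twice returns every configuration to itself, giving $\tau^c*\tau^c=\tau_e$. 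This simultaneously establishes closure in $\{\tau_e,\tau^c\}$ and the existence of inverses (each element is its own inverse). Associativity of $*$ on this subset is inherited from Lemma 5.4. Thus $\langle\{\tau_e,\tau^c\},*\rangle$ satisfies all four group axioms, proving the first assertion.

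For the minimality claim, I would argue as follows. A nontrivial subgroup of $\langle\mathscr{Q}^{\Z},*\rangle$ containing $\tau^c$ must, by the group axioms, also contain the identity $\tau_e$. Since for a binary CA $\tau^c\neq\tau_e$ (the complementary function sends $0\leftrightarrow 1$ pointwise, whereas the identity fixes every configuration), any such subgroup has cardinality at least two. The subgroup we constructed has cardinality exactly two, hence it is the smallest nontrivial subgroup of $\mathscr{Q}^{\Z}$ containing $\tau^c$.

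I do not anticipate a real obstacle here: the heart of the argument is the single observation $\tau^c*\tau^c=\tau_e$, and this is essentially free from Theorem 4.2. The only thing to be careful about is phrasing the minimality statement precisely: one should interpret ``smallest nontrivial group'' as ``smallest nontrivial subgroup of $\mathscr{Q}^{\Z}$ containing the element $\tau^c$,'' since in the abstract sense every group of order two is isomorphic to $\mathbb{Z}/2\mathbb{Z}$ and one cannot distinguish them up to isomorphism.
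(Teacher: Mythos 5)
Your proposal is correct and follows essentially the same route the paper intends: the paper gives no written-out proof but derives the theorem by citing Lemma 5.5 (membership of $\tau^c$), Theorem 4.2 (self-inverseness, which is exactly your key step $\tau^c * \tau^c = \tau_e$), and Lemma 5.3/5.4 (closure and associativity of $*$). Your explicit Cayley-table check and your careful reading of the ``smallest nontrivial group'' claim simply fill in details the paper leaves implicit.
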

Lemma 5.2, Lemma 5.6, Theorem 4.3 and Theorem 5.1 together gives the final result given below.
\begin{thm}
Let us consider the set $G=\{\tau_e\} \bigcup \{\tau_{Lm}|m \in {\N}\} \bigcup \{\tau_{Rm}|m \in {\N}\}$ where $\tau_{Lm}$ is a m-place left-shift function and $\tau_{Rm}$ is a m-place right-shift function. Then $(G, *)$ forms an abelian group with respect to binary operation $*$.
\end{thm}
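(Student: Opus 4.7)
The plan is to verify the group axioms and commutativity for $(G,*)$ by invoking the lemmas and theorems that precede this statement, so that most of the work is already done. First I would note that $G \subseteq \mathscr{Q}^{\Z}$ by Lemma 5.1 and Lemma 5.2, and that associativity of $*$ on $G$ is inherited from Lemma 5.4 (equivalently, from the monoid structure in Theorem 5.1). The element $\tau_e$ is clearly a two-sided identity, since for every $\sigma \in G$ and every $C \in Q^{\Z}$,
\[(\tau_e * \sigma)(C) = \tau_e(\sigma(C)) = \sigma(C) = \sigma(\tau_e(C)) = (\sigma * \tau_e)(C).\]

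Next I would verify closure by examining the four nontrivial products $\tau_{Lm}*\tau_{Ln}$, $\tau_{Rm}*\tau_{Rn}$, $\tau_{Lm}*\tau_{Rn}$, $\tau_{Rm}*\tau_{Ln}$. The same-direction compositions follow from a direct evaluation: for $C\in Q^{\Z}$ and $i\in \Z$,
\[(\tau_{Lm}*\tau_{Ln})(C)(i) = \tau_{Lm}(\tau_{Ln}(C))(i) = \tau_{Ln}(C)(i+m) = C(i+m+n),\]
so $\tau_{Lm}*\tau_{Ln} = \tau_{L(m+n)} \in G$, and symmetrically $\tau_{Rm}*\tau_{Rn} = \tau_{R(m+n)} \in G$. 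The mixed-direction cases with $m \neq n$ fall to Lemma 5.6(2)--(3), which produces a shift of the dominant direction and therefore lies in $G$. The remaining case $m=n$ of the mixed product is not covered by Lemma 5.6, so I would supply it explicitly: $(\tau_{Lm}*\tau_{Rm})(C)(i) = C(i-m+m) = C(i)$, giving $\tau_{Lm}*\tau_{Rm} = \tau_e \in G$. The same calculation also furnishes the inverses required by the group axiom: $\tau_{Lm}^{-1} = \tau_{Rm}$ and $\tau_{Rm}^{-1} = \tau_{Lm}$, in agreement with Theorem 4.3, while $\tau_e^{-1} = \tau_e$ by Theorem 4.1.

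Commutativity is then nearly automatic. Lemma 5.6(1) gives $\tau_{Lm}*\tau_{Rn} = \tau_{Rn}*\tau_{Lm}$ for all $m,n$; the same-direction case is symmetric since $\tau_{Lm}*\tau_{Ln} = \tau_{L(m+n)} = \tau_{Ln}*\tau_{Lm}$, and similarly for right shifts; and compositions with $\tau_e$ commute trivially. I do not expect a genuine obstacle. The only delicate point is the uncovered $m=n$ case of Lemma 5.6, which is essential for producing inverses inside $G$ and must be filled in explicitly; once that is done, the theorem follows mechanically from the cited results.
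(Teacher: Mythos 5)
Your proof is correct and takes essentially the same route as the paper, whose entire argument is a one-line citation of Lemma 5.2, Lemma 5.6, Theorem 4.3 and Theorem 5.1. In fact your write-up is more complete than the paper's: you explicitly verify the same-direction closure $\tau_{Lm}*\tau_{Ln}=\tau_{L(m+n)}$ and the $m=n$ mixed case $\tau_{Lm}*\tau_{Rm}=\tau_e$ (which also supplies the group inverses under $*$), neither of which is covered by the lemmas the paper cites.
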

\section{Conclusion} The results obtained in this paper is for homogeneous and linear Cellular Automaton. An investigation /extension of results for other types of Cellular Automaton may be worth attempting.

\Addresses


\begin{thebibliography}{}
 \bibitem{BAS10}   Basu S etal:Different Types of Linear Fuzzy Cellular Automaton and their Applications, Fundamenta  Informaticae, 87(2008), 185-205.
\bibitem{CRG16a} Castillo-Ramirez, A., Gadouleau, M.: Ranks of finite semigroups of one-dimensional cellular automata. Semigroup Forum \textbf{93}, no. 2, 347--362 (2016).
\bibitem{CRG16b} Castillo-Ramirez, A., Gadouleau, M.: On Finite Monoids of Cellular Automata. In: Cook, M., Neary, T. (eds.) Cellular Automata and Discrete Complex Systems. LNCS \textbf{9664}, 90--104, Springer International Publishing (2016).
\bibitem {}Ghosh S: Computing Difference Equations Using Cellular Automaton(2015), Academia : GMGC, Volume 2, 2015, ISSN 2348-7054
 \bibitem {}Ilachinski Andrew : Cellular Automata , a discrete universe (2001).
\bibitem{Ka05} Kari, J.: Theory of cellular automata: A Survey. Theoret. Comput. Sci. \textbf{334}, 3--33 (2005).
\bibitem{WOS5}  Martin O. , Odlyzko A.M. , Wolfram S. : Algebraic properties of Cellular Automata .  Commun. Math. Phys. 93, 219‐258 (1984) 
\bibitem {NJV1}Neumann J. von : Theory of Self-Reproducing Automata, University of Illinois Press, Illinois (1966) Edited and completed by A.W.Burks.
\bibitem{S15} Salo, V.: Groups and Monoids of Cellular Automata. In: Kari, J. (ed.) Cellular Automata and Discrete Complex Systems. LNCS \textbf{9099}, 17--45, Springer Berlin Heidelberg (2015).
\bibitem {} Sarkar P: A Brief History of Cellular Automata, ACM Computing Surveys, Vol. 32, No. 1, March2000
\bibitem {SC 04} Ceccherini-Silberstein Tulio, Coornaert Michel: Cellular Automata and Groups, Springer-Verlag (2010)
 \bibitem {}Schiff Joel L. : Cellular Automata - A discrete view of the world, John Wiley and sons, (2008).
\bibitem{USM2} Ulam  S.M : On some mathematical problems connected with patterns of growth of  figures; Proc. Symp. Appl Math. 14, 215-224 (1962).
\bibitem{WOS3} Wolfram Stephen:Theory and Applications of Cellular Automata,World Scientific {\bfseries Singapore} (1986)
 \bibitem {WOS4}Wolfram S: A New Kind of Science,Wolfram Media (2002).
\end{thebibliography}
\end{document}